\documentclass[conference]{IEEEtran}



\usepackage{graphicx}
\usepackage{amsmath}
\usepackage{amssymb}

\usepackage{cite}      

\usepackage{multirow}

\usepackage{subfigure} 

\begin{document}

%
\title{Rateless Coding for MIMO Block Fading Channels}
%
%

\author{\authorblockN{Yijia Fan$^*$, Lifeng Lai$^*$, Elza Erkip$^{*\dagger}$, H. Vincent Poor$^*$}
\authorblockA{$^*$Department of Electrical Engineering, Princeton
University,
Princeton, NJ, 08544, USA\\
Email: \{yijiafan,llai,poor\}@princeton.edu \\
$^\dagger$Department of Electrical and Computer Engineering,
Polytechnic University, Brooklyn, NY, 11201, USA \\ Email:
elza@poly.edu}}

%
%
%
%


\maketitle

\begin{abstract}
In this paper the performance limits and design principles of
rateless codes over fading channels are studied. The
diversity-multiplexing tradeoff (DMT) is used to analyze the system
performance for all possible transmission rates. It is revealed from
the analysis that the design of such rateless codes follows the
design principle of approximately universal codes for parallel
multiple-input multiple-output (MIMO) channels, in which each
sub-channel is a MIMO channel. More specifically, it is shown that
for a single-input single-output (SISO) channel, the previously
developed permutation codes of \emph{unit} length for parallel
channels having rate $LR$ can be transformed \emph{directly} into
rateless codes of length $L$ having multiple rate levels $(R, 2R,
\dots, LR)$, to achieve the DMT performance limit.
\end{abstract}


%

\section{Introduction}

\newtheorem{definition}{Definition}
\newtheorem{lemma}{Lemma}
\newtheorem{theorem}{Theorem}
\newtheorem{corollary}{Corollary}
\newtheorem{remark}{Remark}
\newtheorem{numerical example}{Numerical example}

\subsection{Background}

Rateless codes present a class of codes that can be truncated to a
finite number of lengths, each of which has a certain likelihood of
being decoded to recover the entire message. Compared with
conventional coding schemes having a single rate $R$, such codes can
achieve multiple rate levels $(R, 2R, \dots, LR)$, depending on
different channel conditions. A rateless code is said to be
\emph{perfect} if each part of its codeword is capacity achieving.
Compared with conventional codes, rateless codes offer a potentially
\emph{higher rate}. Several results have been obtained on the design
of perfect rateless codes over erasure channels and additive white
Gaussian noise (AWGN) channels (see \cite{rt} and the references
therein).

Unlike in the fixed channel scenario, non-zero error probability
always exists in fading channels, when the instantaneous channel
state information (CSI) is not available at the transmitter and a
codeword spans only one or a small number of fading blocks. In this
scenario, it is well known that there is a fundamental tradeoff
between the information rate and error probability over fading
channels, which can be characterized as the diversity-multiplexing
tradeoff (DMT) \cite{DMT}.

\begin{definition}[DMT]
Consider a multiple-input multiple-output (MIMO) system and a family
of codes $C_\eta$ operating at average SNR $\eta$ per receive
antenna and having rates $R$. The multiplexing gain and diversity
order are defined as
\begin{equation}
r \buildrel \Delta \over = \mathop {\lim }\limits_{\eta  \to \infty
} \frac{{R}}{{\log _2 \eta }} \ \ \mathrm{and} \ \ {\rm{ }}d
\buildrel \Delta \over = - \mathop {\lim }\limits_{\eta  \to \infty
} \frac{{\log _2 P_e \left( R \right)}}{{\log _2 \eta }},
\end{equation}
where ${P_e}\left( R \right)$ is the average error probability at
the transmission rate $R$.
\end{definition}

The DMT is an effective performance measure for implementing the
rateless coding principles in a fading channel. Two main concerns
naturally arise: (a) determining the DMT limit for rateless coding
with finite numbers of blocks in a fading environment and
discovering how it performs with regard to conventional schemes; and
(b) determining DMT achieving codes that are simple (in the sense of
encoding and decoding complexity).

\subsection{Contributions of the Paper}

In this paper, we analyze the DMT performance of rateless codes. The
results show that, compared with conventional coding schemes having
multiplexing gain $r_n$, rateless codes having multiple rates $(r_n,
2r_n, \dots, Lr_n)$ offer an \emph{effective} multiplexing gain $r$
of $Lr_n$, given the same diversity gain at every rate, when $r_n$
is \emph{small}. As $r_n$ increases, the performance of rateless
codes degrades and ultimately becomes the same as that of
conventional schemes. Also while increasing $L$ lifts up the overall
system DMT curve, it does not necessarily improve the system
multiplexing gain for every fixed value of $r_n$. It is then
revealed that the design of such rateless codes follows the
principle of parallel channel codes that are \emph{approximately
universal} \cite{pcd} over fading channels. More specifically, it is
shown that for a single-input single-output (SISO) channel, the
formerly developed \emph{unit} length permutation codes for parallel
channels \cite{pcd} having rate $LR$ can be transformed
\emph{directly} into rateless codes of $L$-length having multiple
rate levels $(R, 2R, \dots, LR)$, to achieve the DMT performance
limit. For multiple-input multiple-output (MIMO) channels, the
results in the paper suggest a type of rateless codes that may be
viewed as a combination of conventional MIMO space-time codes and
parallel channel codes, both of which have been designed for fading
channels.

\subsection{Related Work}

The performance of rateless coding over fading channels has also
been considered in \cite{castura}, in which the throughput and error
probability are discussed. However, the tradeoff between these two
was not analyzed explicitly. For example, the results in
\cite{castura} shows that increasing the value of $L$ will decrease
the system error probability in certain scenario and is therefore
desirable. In this paper we show that while this discovery is true,
the system throughput, i.e., multiplexing gain might decrease when
$L$ becomes larger for every fixed value of $r_n$. Overall, our
results reveal that the optimal design of rateless codes requires
the consideration of both $r_n$ and $L$.

Rateless coding may be considered as a type of Hybrid-ARQ scheme
\cite{arq}. The DMT for ARQ has been revealed in \cite{arq}.
However, it will be shown in the paper that this DMT curve was
incomplete and represents the performance only when $r_n < \min
(M,N)/L$ in which $M$ and $N$ are the number of transmit and receive
antennas. The \emph{complete} DMT curve for rateless coding
including those parts for higher $r_n$ has never been revealed
before, and will be shown in this paper. In addition to this, the
results in this paper also offer a relationship between the design
parameter (i.e., $r_n$ and $L$) and the effective multiplexing gain
$r$ of the system, thus offer further insights into system design
and operational meaning compared to conventional coding schemes.
Furthermore, we suggest new design solutions for rateless codes.
Previous work on finite-rate feedback MIMO channels relies on either
power control or adaptive modulation and coding (e.g., \cite{dmtt}),
which are not necessary for our scheme.

The rest of this paper is organized as follows. The system model is
proposed in Section II. In Section III, the DMT performance of
rateless codes is studied. In Section IV, design of specific
rateless codes over fading channels is discussed. Finally,
concluding remarks are made in Section V.

\section{System Model}

We consider a frequency-flat fading channel with $M$ transmit
antennas and $N$ receive antennas. We assume that the transmitter
does not know the instantaneous CSI on its corresponding forward
channels, while CSI is available at the receiver. Each message is
encoded into a codeword of $L$ blocks. Each block takes $T$ channel
uses. We assume that the channel remains static for the entire
codeword length (i.e., $L$ blocks)\footnote{Note, however, that the
analysis in the paper can be extended straightforwardly to a faster
fading scenario in which the channel varies from block to block
during each codeword transmission.}. The system input-output
relationship can be expressed as
\begin{equation}
{\bf{Y}} = \sqrt {\frac{P} {M}} {\bf{HX}} + {\bf{N}} \label{ior}
\end{equation}
where ${\mathbf{X}} \in \mathbb{C}^{M \times TL}$ is the input
signal matrix; ${\mathbf{H}} \in \mathbb{C}^{N \times M}$ is the
channel transfer matrix whose elements are independent and
identically distributed (i.i.d.) complex Gaussian random variables
with zero means and unit variances; ${\bf{N}} \in \mathbb{C}^{N
\times TL}$ is the AWGN matrix with zero mean and covariance matrix
$\bf{I}$; and ${\bf{Y}} \in \mathbb{C}^{N \times TL}$ is the output
signal matrix. $P$ is the total transmit power, which also
corresponds to the average SNR $\eta$ (per receive antenna) at the
receiver.

The input signal matrix $\bf{X}$ can be written as
\begin{equation}
{\mathbf{X}} =
\left[ {\begin{array}{*{20}c}
   {{\mathbf{X}}_1 } &  \cdots  & {{\mathbf{X}}_L }  \\
 \end{array} } \right] \label{code}
\end{equation}
where ${\bf{X}}_l \in \mathbb{C}^{M \times T}$ is the codeword
matrix being sent during the $l$th block, and its corresponding
receiver noise matrix is denoted by ${\bf{N}}_l \in {\mathbb{C}}^{N
\times T}$. We impose a power constraint on each ${\bf{X}}_l$ so
that\footnote{Note that this is a more strict constraint than
letting $E\left[ {\frac{1} {TL}\left\| {{\mathbf{X}} } \right\|_F^2
} \right] \leqslant M$, which offers at least the same performance.}
\begin{equation}
E\left[ {\frac{1} {T}\left\| {{\mathbf{X}}_l } \right\|_F^2 }
\right] \leqslant M,
\end{equation}
for $l=1,...,L$.

\subsection{Conventional Schemes}

Assume that the transmitter sends the codeword at a rate $R$ bits
per channel use. A message of size $RT$ is encoded into a codeword
${\bf{X}}_l$ ($l=1,\dots,L$) and transmitted in $T$ channel uses. An
alternative method is to encode a message of size $RLT$ into
$\bf{X}$. Both encoding methods will offer the same performance
provided that $T$ is sufficiently large.

\subsection{Rateless Coding}

When rateless coding is applied, we wish to decode a message of size
$RLT$ with the codeword structure as shown in (\ref{code}). During
the transmission, the receiver measures the total mutual information
$I$ between the transmitter and the receiver and compares it with
$RLT$ after it receives each codeword block ${\bf{X}}_l$. If $I <
RLT$ after the $l$th block, the receiver remains silent and waits
for the next block. If $I \ge RLT$ after the $l$th block, it decodes
the received codeword $\left[ {\begin{array}{*{20}c}
   {{\mathbf{X}}_1 } &  \cdots  & {{\mathbf{X}}_l }  \\
 \end{array} } \right] $ and sends one bit of
positive feedback to the transmitter. Upon receiving the feedback,
the transmitter stops transmitting the remaining part of the current
codeword and starts transmitting the next message immediately.

Unlike conventional schemes, this process will bring multiple rate
levels $(R, 2R, \dots, LR)$. For example, if $I \ge RLT$ after the
first block is received (i.e., $l=1$) , the receiver will be able to
decode the entire message and the rate becomes $LR$. Similar
observations can be made for $l=2 \dots L$. Therefore, compared with
conventional schemes, the corresponding transmission rate achieved
by using rateless codes is always \emph{equal or higher}.
Specifically, we define the multiplexing gain for each rate level as
$(r_n, 2r_n, \dots, Lr_n)$ where \[r_n \buildrel \Delta \over =
\mathop {\lim }\limits_{\eta \to \infty } \frac{{R}}{{\log _2 \eta
}}.\] Later we will show through the DMT analysis that rateless
coding can retain the same diversity gain as conventional schemes,
but with a much higher multiplexing gain especially when the
corresponding $r_n$ is low.

\section{Performance analysis}

Denote by $\varepsilon _l$ the decoding error when decoding is
performed after the $l$th block ($0 \le l \le L$) and by $\Pr \left(
{\varepsilon _l, l } \right)$ the joint probability that a decoding
error occurs and decoding is achieved after $l$th block. The system
overall error probability can be expressed as
\[
P_e  = \sum\limits_{l = 1}^L {\Pr \left( {\varepsilon _l ,l}
\right)}.
\]

Define $p\left( l \right)$ ($0 \le l \le L$) to be the probability
with which $I < RLT$ after the $l$th block, and note that $p\left( 0
\right)=1$. Following the steps in Section II.B in \cite{arq}, the
average transmission rate for\emph{ each message }in bits per
channel use is given by
\begin{equation}
\bar R = \frac{{RL}} {{\sum\limits_{l = 0}^{L - 1} {p\left( l
\right)} }}. \label{trate}
\end{equation}
Note that this $\bar R$ describes the average rate with which the
message is removed from the \emph{transmitter}; i.e., it quantifies
how quickly the message is decoded at the receiver. We define the
effective multiplexing gain of the system as
\[
r = \mathop {\lim }\limits_{\eta  \to  + \infty } \frac{{\bar R}}
{{\log _2 \eta }}.
\]
Define $f\left( {k} \right)$ to be the piecewise linear function
connecting the points $\left(k, \left( {M - k} \right)\left( {N - k}
\right)\right)$ for integral $k=0,...,\min(M,N)$. Recall that a
conventional scheme operating at multiplexing gain $r_n$ ($0 \le r_n
\le \min (M,N)$) would have the diversity gain $f \left(r_n\right)$.
The following theorem shows the performance of rateless coding for
$0 \le r_n < +\infty$.

\begin{theorem}
Assume a sufficiently large $T$. For rateless codes having multiple
multiplexing gain levels $(r_n, 2r_n, \dots, Lr_n)$, the
corresponding DMT can be expressed as $(r,d)$ where \[r = r_n \cdot
\frac{L} {l} \ \ {\rm{and}} \ \ d =f\left( {\frac{lr} {L}} \right)\]
for
\[ \frac{{l - 1}} {L}\min \left(
{M,N} \right) \leqslant r_n  < \frac{l} {L}\min \left( {M,N} \right)
\]
and $l=1,2,...L$. Finally, $d=0$ for $r_n \ge \min(M,N)$.
\end{theorem}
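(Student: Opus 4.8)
The plan is to compute the two limits defining $r$ and $d$ directly from the expression \eqref{trate} for $\bar R$, using the standard large-SNR ($\doteq$) machinery from the DMT literature. First I would identify the exponential order of the block-stopping probabilities $p(l)$. Since the channel $\mathbf H$ is fixed across all $L$ blocks, after receiving $l$ blocks the accumulated mutual information is $I = lT\cdot\log\det(\mathbf I + (\eta/M)\mathbf H\mathbf H^\dagger)$ (for a Gaussian input, which is optimal for the information-outage analysis), so the event $\{I < RLT\}$ is exactly the outage event of an $M\times N$ MIMO channel at rate $RL/l = (L/l)\cdot r_n\log_2\eta$ bits per channel use, i.e. at multiplexing gain $Lr_n/l$. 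By the Zheng–Tse outage exponent, $p(l)\doteq \eta^{-f(Lr_n/l)}$ when $Lr_n/l \le \min(M,N)$, and $p(l)\doteq 1$ (order $\eta^0$) when $Lr_n/l > \min(M,N)$. Here I would invoke $f$ as defined just before the theorem, noting $f$ is exactly the piecewise-linear DMT curve and that it is nonincreasing, so the dominant (largest) term in $\sum_{l=0}^{L-1}p(l)$ is the one with the smallest exponent, which is the smallest index $l$ for which $p(l)$ is still polynomially decaying — equivalently the largest $l$ with $Lr_n/l \ge \min(M,N)$...

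Wait — I need to be careful about which term dominates. Since $p(0)=1$ always, $\sum_{l=0}^{L-1}p(l)\doteq \max(1, \max_{1\le l\le L-1}p(l))$, and as $p(l)\le 1$ this is $\doteq 1$ in exponential order \emph{whenever at least one $p(l)$ fails to decay}. So the finer point is not the exponential order of the sum but its \emph{growth rate}: the sum is a finite sum of terms each of order between $\eta^0$ and $\eta^{-(M)(N)}$, hence $\bar R \doteq RL / (\text{number of } l\in\{0,\dots,L-1\} \text{ with } p(l)\doteq 1)$ up to the polynomial corrections, and at the level of multiplexing gain only the count of non-vanishing terms matters. Given the hypothesis $\frac{l-1}{L}\min(M,N)\le r_n < \frac{l}{L}\min(M,N)$, one checks that $p(0),\dots,p(l-1)$ are all of order $\eta^0$ (since $Lr_n/k \ge Lr_n/(l-1) \ge \min(M,N)$ for $k\le l-1$) while $p(l),\dots,p(L-1)$ decay polynomially; hence $\sum_{l'=0}^{L-1}p(l')$ has exactly $l$ terms of constant order, giving $\bar R \doteq RL/l$ and therefore $r = \lim \bar R/\log_2\eta = Lr_n/l$, which rearranges to $r_n = lr/L$ and $r = r_n\cdot L/l$ as claimed.

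For the diversity order I would instead analyze $P_e = \sum_{l=1}^{L}\Pr(\varepsilon_l, l)$. The decoding-after-$l$ event forces $I\ge RLT$ but $I<RLT$ after block $l-1$; conditioned on successful accumulation of mutual information a capacity-achieving (e.g. random Gaussian) code of blocklength $T\to\infty$ makes the conditional decoding error negligible in SNR exponent, so $\Pr(\varepsilon_l,l)\doteq \Pr(I<RLT \text{ after block } l)\cdot(\text{something}\le1)$, and in any case $P_e$ is dominated — in the sense of smallest exponent — by the probability $p(L)$ that decoding never succeeds, i.e. the outage of the full $L$-block channel at rate $RL/L$... but this needs the rate interpretation in terms of $r$ not $r_n$: the relevant outage is at the \emph{effective} multiplexing gain $r$ spread over the $L$ blocks, i.e. at per-channel-use multiplexing gain $lr/L$ on the underlying $M\times N$ link once we account for the $L/l$ rate boost. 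Making this bookkeeping precise, $P_e \doteq \eta^{-f(lr/L)}$, giving $d = f(lr/L)$. When $r_n \ge \min(M,N)$ even the first block is in outage with probability of order $\eta^0$, so $P_e\doteq\eta^0$ and $d=0$.

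The main obstacle I anticipate is the joint-probability bookkeeping for $\Pr(\varepsilon_l,l)$ and, relatedly, pinning down precisely which outage event (at which multiplexing gain, over how many blocks) controls $P_e$ so that the argument of $f$ comes out as $lr/L$ rather than $r_n$ or $r$. The limit for $r$ is comparatively routine — it reduces to counting how many of $p(0),\dots,p(L-1)$ have zero SNR-exponent under the stated interval hypothesis — but the error-probability side requires carefully separating the "never decoded'' contribution from the "decoded but wrong'' contributions and confirming the conditional error exponent is dominated by the outage exponent once $T$ is large, which is where the "sufficiently large $T$'' hypothesis does its work.
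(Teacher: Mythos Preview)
Your approach matches the paper's: compute $p(l)\doteq\eta^{-f(Lr_n/l)}$, determine $r$ from $\bar R$, and show $P_e\doteq\eta^{-f(r_n)}$ by splitting into outage plus a conditional error that vanishes for large $T$. Your direct count of how many of $p(0),\dots,p(L-1)$ have zero SNR-exponent is arguably cleaner than the paper's inductive argument, which steps through the ranges $l=1,2,\dots$ one at a time but arrives at the same conclusion.

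One point of unnecessary confusion in your diversity paragraph: there is no ``rate-boost bookkeeping'' needed. The dominant error term is $\Pr(\varepsilon_L,L)$, controlled by $p(L)=\Pr(LI_b<RLT)$, which is simply the $M\times N$ outage at multiplexing gain $r_n$; hence $d=f(r_n)$ directly. The formula $d=f(lr/L)$ in the theorem is just a rewrite via the identity $r_n=lr/L$ you already established for $r$, not an independent computation. Finally, you omit the converse: the paper lower-bounds $P_e\ge\Pr(\varepsilon_L,L)\mathop{\geqslant}\limits^{.}\eta^{-f(r_n)}$ via Fano's inequality, and you should include this to conclude $P_e\doteq\eta^{-f(r_n)}$ rather than only $P_e\mathop{\leqslant}\limits^{.}\eta^{-f(r_n)}$.
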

\begin{proof}
See Appendix A.
\end{proof}

Note that for rateless coding to achieve the performance in
\emph{Theorem 1}, we do not necessarily require $T \rightarrow
+\infty$. As long as $T$ is large enough such that the error
probability $\Pr \left( {\varepsilon _l, l} \right) \mathop
\leqslant \limits^. \eta^{f\left( {r_n } \right)}$ for each $l$, the
DMT in \emph{Theorem 1} can be achieved. While the minimal $T$ for a
general MIMO channel when applying rateless coding is unknown to the
authors, it will be shown later that for SISO channels, $T=1$ is
sufficient to achieve the optimal DMT in \emph{Theorem 1}.

Comparing rateless coding with conventional schemes, it can be shown
that for $0 \le r_n <\min (M,N)/L$, $r=Lr_n$ for $d=f \left(r_n
\right)$. In this scenario rateless coding can improve the
multiplexing gain up to $L$ times that of conventional schemes,
given the same diversity gain. Fig. 1 gives an example when $M=N=2$
and $L=2$, and $0 \le r_n \le 1$. The operating point A in the curve
for a conventional scheme for $0 \le r_n \le 1$ corresponds to point
B in the curve for rateless coding.

\begin{figure}[t!]
\centering
\includegraphics[width=3.7in]{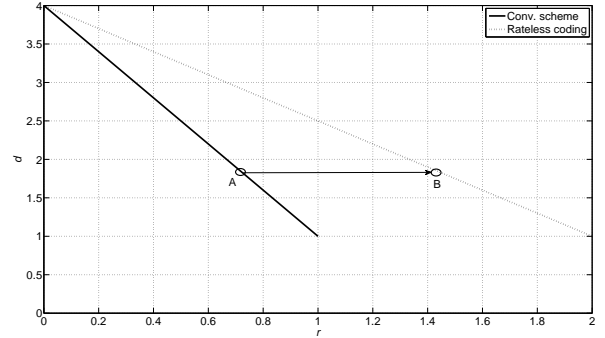}
\caption{The DMTs for conventional schemes and rateless coding for
$0 \le r_n \le 1$. $M=N=2$, $L=2$.} \label{abq1}
\end{figure}

An important observation from \emph{Theorem 1} is that the system
performance will not be improved after $r_n$ (\emph{almost}) reaches
$\min(M,N)/L$, as the optimal DMT is already achieved by using
rateless coding. This is mainly due to the fact that the first block
can no longer support the message size when the message rate reaches
$\min(M,N)/L$. Thus the system multiplexing gain \emph{decreases}
for the same diversity gain, and finally offers the same DMT as
conventional schemes when the first $L-1$ blocks all fail to decode
the message. Fig. 2 shows an example when $M=N=3$, $L=4$. This
observation also implies that for any fixed value of $r_n$, simply
increasing the value of $L$ does \emph{not} \emph{necessarily}
improve the system DMT performance. Although the overall system DMT
will increase when $L$ is larger, the multiplexing gain might
decrease for certain fixed values of $r_n$. A convenient choice for
$L$ would be in the region of $L < \min(M,N)/r_n$. However, note
that the maximal multiplexing gain $\min (M,N)$ can be achieved only
with zero diversity gain, and this happens when $r_n = \min (M,N)$
\emph{regardless of} the value of $L$.

\begin{figure}[t!]
\centering
\includegraphics[width=3.7in]{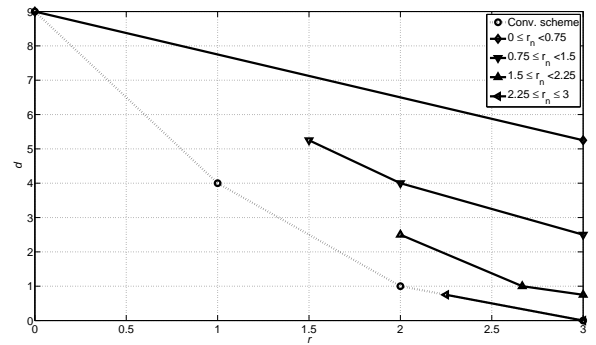}
\caption{The DMTs for different schemes for $0 \le r_n \le 3$.
$M=N=3$, $L=4$.} \label{dmtfb}
\end{figure}

\section{Design of rateless codes}

Note that codewords ${\bf{X}}_i$ ($1 \le i \le L$) in (\ref{code})
are transmitted through different channels that are
\emph{orthogonal} in time. This is analogous to transmitting
${\bf{X}}_i$ through different channels that are parallel in
\emph{space}. In the (space) parallel channel model, elements in
$\left\{{\bf{X}}_i \right\}$ can be jointly (simultaneously)
decoded. However, for the channel model considered in this paper,
which we now call the \emph{rateless channel}, the decoding process
needs to follow certain direction in time, i.e., we start decoding
from ${\bf{X}}_1$, then $\left[{\bf{X}}_1 \ \ {\bf{X}}_2\right]$ if
${\bf{X}}_1$ is not decoded, etc. This comparison implies that while
good parallel channel codes can be used as the basis for rateless
coding, they might need modifications in order to offer good
performance over the rateless channel.

Specifically, for the rateless channel expressed in the form of
(\ref{ior}), we consider the corresponding parallel MIMO channel, in
which each sub-channel is a MIMO channel, having the following
input-output relationship:
\begin{eqnarray}
{\mathbf{Y}} = \sqrt {\frac{P}{M}} \left( {\begin{array}{*{20}c}
   {{\mathbf{H}} } & {} & \text{\Large{0}}  \\
   {} &  \ddots  & {}  \\
   \text{\Large{0}} & {} & {{\mathbf{H}} }  \\

 \end{array} } \right)\left( \begin{gathered}
  {\mathbf{X}}_1  \hfill \\
   \vdots  \hfill \\
  {\mathbf{X}}_L  \hfill \\
\end{gathered}  \right) + \left( \begin{gathered}
  {\mathbf{N}}_1  \hfill \\
   \vdots  \hfill \\
  {\mathbf{N}}_L  \hfill \\
\end{gathered}  \right)
\label{iop}
\end{eqnarray}
where $\bf{H}$, ${\bf{X}}_i$ and ${\bf{N}}_i$ are the same as those
in (\ref{ior}). It is easy to see that the DMT for this system is $d
= f\left( {\frac{r} {L}} \right)$ for $0 \le r \le L\min(M,N)$.
Assuming a code that achieves this DMT, when we implement its
transformation $\left[ {\begin{array}{*{20}c}
   {{\mathbf{X}}_1 } &  \cdots  & {{\mathbf{X}}_L }  \\
 \end{array} } \right]$ into the rateless channel having multiple rates $(r_n, 2r_n, \dots, Lr_n)$, it is
not difficult to show
 that
 \begin{equation}
\Pr \left( {\varepsilon _L ,L} \right) \mathop  \leqslant \limits^.
\eta ^{ -f\left( {r_n} \right)}. \label{parlc}
\end{equation}
In order to make the overall $P_e \mathop \leqslant \limits^. \eta
^{ -f\left( r_n \right)}$, we need to ensure that $\Pr \left(
{\varepsilon _l ,l} \right) \mathop \leqslant \limits^. \eta ^{
-f\left( r_n \right)}$ for $1 \le l \le L-1$. However, those
conditions are not \emph{essential} in order to achieve the optimal
DMT for the parallel channel shown in (\ref{iop}), which only
requires the condition (\ref{parlc}). Thus stricter code design
criteria are required for the rateless channel. One example of such
a criterion is the \emph{approximately universal} criterion
\cite{pcd}.

Codes being \emph{approximately universal} for parallel channels
ensure that the highest error probability when decoding \emph{any}
subset of $\{{\bf{X}}_i\}$ in the set of all non-outage events
decays \emph{exponentially} in SNR (i.e., in the form of
$e^{-\eta^\delta}$ for some $\delta >0$) under \emph{any }fading
distribution, and thus can be ignored compared with the outage
probability under the same fading distribution, when the SNR goes to
infinity. Specifically, we consider the following parallel MIMO
channel which is more general than the one in (\ref{iop}):
\begin{eqnarray}
{\mathbf{Y}} = \sqrt {\frac{P}{M}} \left( {\begin{array}{*{20}c}
   {{\mathbf{H}}_1 } & {} & \text{\Large{0}}  \\
   {} &  \ddots  & {}  \\
   \text{\Large{0}} & {} & {{\mathbf{H}}_L }  \\

 \end{array} } \right)\left( \begin{gathered}
  {\mathbf{X}}_1  \hfill \\
   \vdots  \hfill \\
  {\mathbf{X}}_L  \hfill \\
\end{gathered}  \right) + \left( \begin{gathered}
  {\mathbf{N}}_1  \hfill \\
   \vdots  \hfill \\
  {\mathbf{N}}_L  \hfill \\
\end{gathered}  \right)
\label{iop2}
\end{eqnarray}
where each channel matrix in $\{{\bf{H}}_i\}$ ($1 \le i \le L$)
follows an \emph{arbitrary} distribution. In particular, when the
matrices in $\{{\bf{H}}_i\}$ are i.i.d. and of the same
distributions as the $\bf{H}$ in (\ref{ior}), following the same
steps as those in \cite{DMT}, it is not difficult to show that the
optimal DMT for this system is $d = L f\left( {\frac{r} {L}}
\right)$ for $0 \le r \le L\min(M,N)$. Now, we are ready to state
the following theorem considering the performance of rateless codes
that are transformed from the approximately universal codes for the
parallel channel in (\ref{iop2}).
\begin{theorem}
Suppose a code $\left[ {\begin{array}{*{20}c}
   {{\mathbf{X}}_1^T } &  \cdots  & {{\mathbf{X}}_L^T }  \\
 \end{array} } \right]^T$ is \emph{approximately universal} for the parallel
 channel shown in (\ref{iop2}) and can achieve the DMT points
 $(Lr_n, Lf\left( {r_n} \right))$ for $0 \le r_n \le \min
 (M,N)$ when the channel matrices have i.i.d. Rayleigh fading. Then, its transformation $\left[ {\begin{array}{*{20}c}
   {{\mathbf{X}}_1 } &  \cdots  & {{\mathbf{X}}_L }  \\
 \end{array} } \right]$, when applied to the rateless channel shown in (\ref{ior}) aiming at multiple multiplexing gains $(r_n, 2r_n, \dots, Lr_n)$,
 can achieve the DMT shown in \emph{Theorem
 1}.
\end{theorem}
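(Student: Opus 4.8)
The plan is to reduce the rateless decoding problem, block by block, to decoding the given approximately universal parallel code over a \emph{degenerate} realization of the channel in (\ref{iop2}), and to exploit the fact that approximate universality forces the conditional error probability to vanish exponentially on \emph{every} non-outage realization. Concretely, I would first argue that it suffices to prove $\Pr(\varepsilon_l,l)\mathop\leqslant\limits^.\eta^{-f(r_n)}$ for every $l$: the remark following \emph{Theorem 1} shows this is enough to attain the claimed DMT, and the probabilities $p(l)$ appearing in (\ref{trate}) depend only on the channel statistics, not on the code, so the multiplexing-gain evaluation of Appendix A carries over verbatim.

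\textbf{Step 1 (reduction to a degenerate parallel channel).} Fix $l\in\{1,\dots,L\}$ and condition on the event that the rateless receiver decodes after block $l$. At that instant the decoder of $\left[{\mathbf{X}}_1\ \cdots\ {\mathbf{X}}_L\right]$ is fed the partial observation $\left[{\mathbf{Y}}_1\ \cdots\ {\mathbf{Y}}_l\right]$ with ${\mathbf{Y}}_i=\sqrt{P/M}\,{\mathbf{H}}{\mathbf{X}}_i+{\mathbf{N}}_i$. I would observe that this is \emph{identical} to running the parallel-channel decoder of (\ref{iop2}) on the realization ${\mathbf{H}}_1=\cdots={\mathbf{H}}_l={\mathbf{H}}$, ${\mathbf{H}}_{l+1}=\cdots={\mathbf{H}}_L={\mathbf{0}}$: blocks $l+1,\dots,L$ are simply erased, and the aggregate mutual information of this realization is exactly $I_l$. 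Hence the rateless decoding rule ``$I_l\ge RLT$'' of Section II.C says precisely that this realization is \emph{not in outage} at total rate $RL$.

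\textbf{Step 2 (invoking approximate universality).} Because the code is approximately universal for (\ref{iop2}) under \emph{any} fading distribution, the bound applies to the degenerate realizations of Step 1: conditioned on decoding after block $l$, the error probability is at most $c\,e^{-\eta^{\delta}}$ for constants $c,\delta>0$ independent of ${\mathbf{H}}$ and $\eta$. Averaging over ${\mathbf{H}}$ gives $\Pr(\varepsilon_l,l)\le c\,e^{-\eta^{\delta}}$ for $1\le l\le L-1$, which is super-polynomially small and hence trivially $\mathop\leqslant\limits^.\eta^{-f(r_n)}$. For $l=L$ I would split the decoding event according to whether $I_L\ge RLT$, where the same argument applies, or $I_L<RLT$ (forced decoding under outage), whose contribution is crudely bounded by $\Pr(I_L<RLT)=\Pr\!\left(\log\det\!\left({\mathbf{I}}+\tfrac{\eta}{M}{\mathbf{H}}{\mathbf{H}}^{\dagger}\right)<R\right)\mathop=\limits^.\eta^{-f(r_n)}$, the standard i.i.d.\ Rayleigh outage exponent \cite{DMT}.

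\textbf{Step 3 and the main obstacle.} Summing, $P_e\mathop=\limits^.\eta^{-f(r_n)}$, so the diversity gain is $f(r_n)=f(lr/L)$ with $r=r_nL/l$; combined with the code-independent evaluation of $\{p(l)\}$ from Appendix A (which yields $\bar R\mathop=\limits^. RL/l$, hence $r=r_nL/l$, in the regime $\tfrac{l-1}{L}\min(M,N)\le r_n<\tfrac{l}{L}\min(M,N)$), this reproduces exactly the DMT of \emph{Theorem 1}. I expect the delicate point to be the justification in Step 2 that approximate universality over the $L$-sub-channel parallel channel genuinely covers the degenerate realizations in which $L-l$ sub-channels are identically zero and the remaining ones coincide — this is exactly where one must use the full ``under any fading distribution'' strength of the approximately universal criterion \cite{pcd}, and it is why codes that merely achieve the parallel-channel DMT (so that only (\ref{parlc}) is guaranteed) do not suffice; the forced-decoding-under-outage case for $l=L$ is a secondary subtlety, dispatched by the union bound above.
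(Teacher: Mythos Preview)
Your proposal is correct and follows essentially the same route as the paper's proof: both reduce decoding after block $l$ in the rateless channel to decoding the parallel code over the degenerate realization ${\mathbf{H}}_1=\cdots={\mathbf{H}}_l={\mathbf{H}}$, ${\mathbf{H}}_{l+1}=\cdots={\mathbf{H}}_L={\mathbf{0}}$, invoke approximate universality to bound the non-outage conditional error by $e^{-\eta^{\delta}}$, absorb the forced-decoding case $l=L$ into the outage term $\eta^{-f(r_n)}$, and then recycle the multiplexing-gain computation of Appendix~A. Your write-up is in fact more explicit than the paper's (which compresses all of this into a few lines) about the $l=L$ split and about why the ``any fading distribution'' clause is exactly what licenses the degenerate realizations.
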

\begin{proof}
See Appendix B.
\end{proof}
While approximately universal codes for the general parallel MIMO
channel is unknown to the authors, approximately universal codes for
parallel SISO channels do exist, and can be transformed directly
into good rateless codes for SISO channels. In the following, we
apply permutation codes for parallel channels \cite{pcd} to the
rateless channel.

Permutation codes are a class of codes generated from QAM
constellations. In the encoding process, a message is mapped into
different QAM constellation points across all subchannels. The
constellation over one subchannel is a permutation of the points in
the constellation over any other subchannel. The permutation is
optimized such that the minimal codeword difference is large enough
to satisfy the approximate universality criterion. Explicit
permutation codes can be constructed using \emph{universally
decodable matrices}. We refer the readers to \cite{pcd} and the
references therein for details. It has been shown that permutation
codes achieve the optimal DMT for parallel channels and have a
particularly simple structure. For example, the codewords are of
\emph{unit} length.

Assume the transmission rates over rateless channel are $(R, 2R,
\dots, LR)$ bits per channel use. To implement permutation codes, we
choose a codebook of size $2^{LR}$ (messages) for the parallel
channel in (\ref{iop2}). Each message is mapped into a code $\left[
{\begin{array}{*{20}c}
   {{\mathbf{X}}_1^T } &  \cdots  & {{\mathbf{X}}_L^T }  \\
 \end{array} } \right]^T$, in which each
 ${\mathbf{X}}_l$ is an $2^{LR}$-point QAM constellation. The message
 can be fully recovered as long as any subset of $\left\{{\mathbf{X}}_l\right\}$ can be
 correctly decoded. Now, we transform this code into the form $\left[ {\begin{array}{*{20}c}
   {{\mathbf{X}}_1 } &  \cdots  & {{\mathbf{X}}_L }  \\
 \end{array} } \right]$ for the rateless channel. Since $\Pr \left( {\varepsilon _l ,l} \right)$ decays exponentially
in SNR due to the approximate universality of such codes, the
overall error probability is always dominated by that upon receiving
all ${\bf{X}}_l$ for \emph{infinitely} high SNR. More precisely, we
summarize the above observations as the following corollary.

\begin{corollary}
Rateless codes that are transformed from permutation codes for
parallel channels can offer exactly the same performance as shown in
\emph{Theorem 1} over the SISO rateless channel.
\end{corollary}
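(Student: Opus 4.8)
The plan is to reduce the statement to the error-probability criterion already isolated in the remark following \emph{Theorem 1}, namely that it suffices to have $\Pr\left(\varepsilon_l,l\right)\mathop\leqslant\limits^{.}\eta^{-f(r_n)}$ for every $l$. First I would recall from \cite{pcd} that the permutation codes in question are \emph{approximately universal} for parallel SISO channels, and that this already holds for codewords of \emph{unit} length: for \emph{every} channel realization outside the outage set and for \emph{every} subset of the $L$ transmitted symbols $\{{\bf X}_i\}$, the conditional decoding error is at most $e^{-\eta^{\delta}}$ for some $\delta>0$. The key observation is that, under the rateless protocol, decoding is attempted after block $l$ only on the event that the accumulated mutual information $I$ has reached $RLT$; restricted to sub-channels $1,\dots,l$ this is precisely the event that the corresponding $l$-term parallel sub-system is \emph{not} in outage for a message of $RLT$ bits. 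Hence, conditioned on decoding after block $l$, the error of the transformed permutation code is at most $e^{-\eta^{\delta}}$, so $\Pr\left(\varepsilon_l,l\right)\le e^{-\eta^{\delta}}\mathop\leqslant\limits^{.}\eta^{-f(r_n)}$ for $1\le l\le L-1$, while for $l=L$ the only non-exponential contribution is the all-$L$-block outage event $\{\log(1+\eta|H|^2)<R\}$, whose probability decays as $\eta^{-f(r_n)}$, so that $\Pr\left(\varepsilon_L,L\right)\mathop\leqslant\limits^{.}\eta^{-f(r_n)}$ as well.

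Once $\Pr\left(\varepsilon_l,l\right)\mathop\leqslant\limits^{.}\eta^{-f(r_n)}$ is established for all $l$, the remark following \emph{Theorem 1} immediately yields that the transformed code attains the DMT of \emph{Theorem 1}, and since that DMT is also the performance limit, the two coincide exactly. Because permutation codes have unit length, this conclusion already holds at $T=1$, which settles for SISO channels the point left open after \emph{Theorem 1} that $T=1$ is sufficient. The same conclusion can alternatively be obtained through \emph{Theorem 2}: permutation codes are approximately universal for the parallel channel~(\ref{iop2}) specialized to SISO and, by \cite{pcd}, achieve its optimal tradeoff $d=Lf(r/L)=L-r$, i.e.\ the points $(Lr_n,Lf(r_n))$ for $0\le r_n\le 1$, whereupon \emph{Theorem 2} gives the \emph{Theorem 1} DMT for the transformed code directly.

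The step I expect to require the most care is transporting the parallel-channel picture back to the rateless channel: the approximate-universality and optimal-DMT results of \cite{pcd}, as well as the hypothesis of \emph{Theorem 2}, are phrased for \emph{independent} sub-channels, whereas on the rateless channel~(\ref{ior}) all $L$ sub-channels carry the \emph{same} matrix ${\bf H}$ --- a fully correlated, degenerate instance of~(\ref{iop2}). The resolution is that approximate universality is a statement about each individual non-outage \emph{realization}, not about the sub-channel \emph{distribution}, so it applies verbatim to this degenerate realization provided it lies outside the outage set; one then only needs to check that the protocol's requirement that $I$ reach $RLT$ after block $l$ coincides with the non-outage condition for sub-channels $1,\dots,l$. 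With these observations in place, everything else is the elementary outage-exponent computation sketched above together with the bookkeeping of effective multiplexing gains already carried out in the proof of \emph{Theorem 1}.
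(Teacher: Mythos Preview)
Your proposal is correct and follows essentially the same route as the paper, which simply states that the proof is a direct extension of that of \emph{Theorem 2} and omits the details. Your second paragraph in particular (invoke approximate universality of permutation codes for the parallel SISO channel, note they achieve $(Lr_n,Lf(r_n))$, then apply \emph{Theorem 2}) is exactly the paper's intended argument; your first paragraph merely unpacks the proof of \emph{Theorem 2} in this special case, and your discussion of the independent-versus-identical sub-channel issue is already handled inside the paper's proof of \emph{Theorem 2} via the choice ${\bf H}_1=\dots={\bf H}_l={\bf H}$, ${\bf H}_{l+1}=\dots={\bf H}_L\equiv 0$.
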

\begin{proof}
The proof is a direct extension of the proof of \emph{Theorem 2} and
is omitted.
\end{proof}

\section{Conclusions}

The performance of rateless codes has been studied for MIMO fading
channels in terms of the DMT. The analysis shows that design
principles for rateless codes can follow these of the approximately
universal codes for parallel MIMO channels. Specifically, it has
been shown that for a SISO channel, the formerly developed
permutation codes of \emph{unit} length for parallel channels having
rate $LR$ can be transformed \emph{directly} into rateless codes of
length $L$ having multiple rate levels $(R, 2R, \dots, LR)$, to
achieve the desired optimal DMT performance.

\appendix

\subsection{Proof of Theorem 1}
Define $r_L=Lr_n$. Following the steps in \cite{DMT}, it is easy to
show that $p \left(l \right)\doteq \eta ^{-f\left( {\frac{r_L} {l}}
\right)}$ for $l \neq 0$. We write the error probability as
\begin{equation}
P_e  = \sum\limits_{l = 1}^{L - 1} {(1 - p \left( l \right))\Pr
\left( {\varepsilon _l } \right)}  + \Pr \left( {\varepsilon _L, L}
\right). \label{ub}
\end{equation}
In (\ref{ub}), $\Pr \left( {\varepsilon _l } \right)$ is error
probability when $lI_b \ge LTR$, where $I_b$ is the mutual
information of the channel in each block. Using Fano's inequality we
can obtain the error probability lower bound \cite{DMT}:
\[P_e \ge \Pr \left( {\varepsilon _L, L} \right) \mathop  \geqslant \limits^.
\eta ^{ -f\left( {\frac{r_L} {L}} \right)}. \] Since $r \le r_L$, we
have $\eta ^{ -f\left( {\frac{r_L} {L}} \right)}  \ge \eta ^{ -
f\left( {\frac{r} {L}} \right)}$, and thus the desired performance
upper bound is obtained.

Now we prove the achievability part. Consider ${\Pr \left(
{\varepsilon _l } \right)}$. Following the same argument as in the
proof of Theorem 10.1.1 in \cite{14}, we get
\begin{equation}
\Pr \left( {\varepsilon _l } \right) \leqslant 3\epsilon \label{gau}
\end{equation}
for sufficiently large $T$. Note that a very similar argument has
been made in \emph{Lemma 1} in \cite{7}, although it is claimed
there that both $T$ and $L$ are required to be sufficiently large in
order to satisfy (\ref{gau}). Now $(\ref{ub})$ can be further
rewritten as
\begin{eqnarray}
P_e &\leqslant& 3(L-1) \epsilon + \eta ^{ -f\left( {\frac{r_L} {L}}
\right)} + (1-p \left( L \right))\Pr \left( {\varepsilon _L} \right)
\nonumber
\\ &\doteq& \eta ^{ -f\left( {\frac{r_L} {L}}
\right)}.
\end{eqnarray}
Note that
\[
\bar R \doteq \frac{{LR}} {{1 + \sum\limits_{i = 1}^{L - 1} {\eta ^{
- f\left( {\frac{r_L} {l}} \right)} } }} \doteq LR
\]
for $0 \le r_L < \min (M,N)$. Thus $r=r_L$ and diversity gain
$f\left( {\frac{r} {L}} \right)$ is achievable in the range $0 \le r
< \min (M,N)$. Note that $r_L=Lr_n$, and thus we have $d=f\left( r_n
\right)$ for
\[r = r_n L, 0 \le r_n < \frac{\min(M,N)}{L}.
\]

So far we have only considered the scenario in which
$r_n<\frac{\min(M,N)}{L}$. Now the question to ask is what happens
if we increase the value of $r_n$ to $\frac{\min(M,N)}{L}$ and
beyond. In this scenario, $f\left( {\frac{r_L} {1}} \right) = 0$,
and thus $\bar R \doteq \frac{{LR}} {2}.$ The message rate $r$ is
decreased to $r_L/2$ due to the fact that after the first block the
receiver has no chance of decoding the message correctly and it
always needs the second block. However, the system error probability
$P_e$ is not changed. Therefore the message rate becomes
\begin{equation}
r=r_n \cdot\frac{L}{2}, \frac{\min(M,N)}{L} \le r_n <
\frac{2\min(M,N)}{L},
\end{equation}
and the system DMT becomes
\begin{equation}
d = f\left( {\frac{2r} {L}} \right) , \frac{\min(M,N)}{2} \le r <
\min(M,N).
\end{equation}
Similarly, when $r$ reaches $\min(M,N)$ again, i.e., $r_n$ reaches
$\frac{2\min(M,N)}{L}$, $f\left( {\frac{r_L} {2}} \right)= f\left(
{\frac{2r} {2}} \right) = 0. $ Thus $\bar R \doteq \frac{{LR}} {3}$
and
\begin{equation}
r=r_n \cdot \frac{L}{3}, \frac{2\min(M,N)}{L} \le r_n <
\frac{3\min(M,N)}{L};
\end{equation}
the system DMT becomes
\begin{equation}
d = f\left( {\frac{3r} {L}} \right) , \frac{2\min(M,N)}{3} \le r <
\min(M,N).
\end{equation}
Continuing following the above until $ \bar R \doteq R$, we obtain
the desired result and the proof is completed.

\subsection{Proof of Theorem 2}
Assume that the system in (\ref{iop}) transmits at a rate $LR=r_L
\log _2 \eta$. The probability of any decoding error can be upper
bounded by \cite{DMT}
\[
P  \leqslant P_O  + P_{e|O^c }
\]
where $P_O$ is the outage probability and $P_{e|O^c }$ is the
average error probability given that the channel is not in outage.
Approximately universality means that for such codes $P_{e|O^c }
=e^{-\eta^\delta}$ under \emph{any} fading distribution. For the
system in (\ref{iop2}), these include the fading distributions in
which ${\bf{H}}_1= \dots = {\bf{H}}_l$ follow the same distribution
as the $\bf{H}$ in (\ref{ior}) and ${\bf{H}}_{l+1}=\dots={\bf{H}}_L
\equiv 0$ for all $1 \le l \le L-1$. When such codes are transformed
into the rateless channels shown in (\ref{ior}), it is a simple
matter to show that
\[
\Pr \left( {\varepsilon _l } \right) = P_{e|O^c } =e^{-\eta^\delta}
\]
for any $1 \le l \le L$, where $\Pr \left( {\varepsilon _l }
\right)$ is given in (\ref{ub}). Thus the system error probability
for the rateless channel in (\ref{ior}) is always upper bounded by
\begin{eqnarray}
P_e  \leqslant Le^{ - \eta ^\delta  }  + \eta ^{ - f\left(
{\frac{{r_L }} {L}} \right)} \nonumber \doteq \eta ^{ - f\left(
{\frac{{r_L }} {L}} \right)}.
\end{eqnarray}
The rest of the proof follows that of \emph{Theorem 1} and is
omitted.


%
%


\section*{Acknowledgement}
This research was supported by the U.S. National Science Foundation
under Grants ANI-03-38807 and CNS-06-25637.




%


\end{document}